\definecolor{blueLink}{rgb}{0,0.2,0.8}
\DeclareUrlCommand\email{\urlstyle{rm}}
\let\pref=\prettyref
\newtheorem{theorem}{Theorem}
\newtheorem{lemma}[theorem]{Lemma}
\newtheorem{corollary}[theorem]{Corollary}
\newcommand{\N}{\mathbb N}
\newcommand{\cR}{\mathcal{R}}
\newcommand{\cD}{\mathcal{D}}
\newcommand{\ADV}{\textsc{Adv}\xspace}
\newcommand{\ALG}{\textsc{Alg}\xspace}
\newcommand{\OPT}{\textsc{Opt}\xspace}
\newcommand{\X}{\mathcal{X}}
\newcommand{\T}{\mathcal{T}}
\newcommand{\oldd}{b}
\newcommand{\ih}{i_h}
\title{Unbounded lower bound for k-server \\ against weak adversaries\thanks{%
Supported by Polish National Science Centre grants 
2015/18/E/ST6/00456, 2016/22/E/ST6/00499, 2016/21/D/ST6/02402,
and the NWO VICI grant 639.023.812.}}
\author[1]{Marcin Bienkowski}
\author[1]{Jaros{\l}aw Byrka}
\author[2]{Christian Coester}
\author[1]{{\L}ukasz Je\.{z}}
\affil[1]{Institute of Computer Science, University of Wrocław, Poland\protect\\
\{marcin.bienkowski,jaroslaw.byrka,lukasz.jez\}\MVAt{}cs.uni.wroc.pl}
\affil[2]{CWI, Amsterdam, Netherlands\protect\\
christian.coester\MVAt{}cwi.nl}
\date{}
\begin{document}

\maketitle

\begin{abstract}
	We study the resource augmented version of the $k$-server problem, also known as
	the $k$-server problem against weak adversaries or the $(h,k)$-server
	problem. In this setting, an online algorithm using $k$ servers is compared
	to an offline algorithm using $h$ servers, where $h\le k$. For uniform
	metrics, it has been known since the seminal work of Sleator and Tarjan
	(1985) that for any $\epsilon>0$, the competitive ratio drops to a constant
	if $k=(1+\epsilon) \cdot h$. This result was later generalized to weighted stars
	(Young 1994) and trees of bounded depth (Bansal et al. 2017). The main open
	problem for this setting is whether a similar phenomenon occurs on general
	metrics.
	
	We resolve this question negatively. With a simple recursive construction,
	we show that the competitive ratio is at least $\Omega(\log \log h)$, even
	as $k\to\infty$. Our lower bound holds for both deterministic and randomized
	algorithms. It also disproves the existence of a~competitive algorithm for
	the infinite server problem on general metrics.
\end{abstract}


\section{Introduction}

The $k$-server problem is one of the most well-studied and influential online
problems in competitive analysis, defined in 1990 by Manasse et
al.~\cite{MaMcSl90}. It generalizes many problems in which an~algorithm has to
maintain a feasible state while satisfying a sequence of requests. Formally, the
$k$-server problem is defined as follows. There are $k$ servers in a metric
space $(\X, d)$ and a~sequence $r_1, r_2, r_3, \ldots$ of requests to metric
space points appears online. In response to a~request~$r_i$, an~algorithm has to
move its servers, so that one of them ends at point $r_i$. The goal is to
minimize the cost defined as the total distance traveled by all servers.

\subsection{From Uniform to General Metrics}

The definition of the $k$-server problem is deceivingly simple, but it has led
to substantial progress in many branches of competitive analysis. Historically,
the results were obtained first for the case where $\X$ is a~uniform metric
space: the $k$-server problem is then equivalent to the paging problem with a
cache of size~$k$~\cite{SleTar85}. In particular, the competitive ratio for
paging is $k$ for deterministic algorithms and there is a lower bound of $k$
that holds for arbitrary metric spaces of more than $k$ points~\cite{MaMcSl90}.
This led to the bold \emph{$k$-server conjecture}~\cite{MaMcSl90} stating that
this ratio is $k$ for all metric spaces. After series of papers proving the
upper bound of $k$ for particular metrics (e.g., trees or lines), the conjecture
has been positively resolved (in the asymptotic sense) by the celebrated $2k-1$
upper bound due to Koutsoupias and Papadimitriou~\cite{KouPap95}. For a more
thorough treatment of the history of deterministic approaches, see a survey by
Koutsoupias~\cite{Koutso09}.

Similarly, randomized competitive solutions for uniform
metrics~\cite{McGSle91,AcChNo00,FKLMSY91} showed that the achievable competitive
ratio is exactly $H_k = \Theta(\log k)$ and led to the analogous
\emph{randomized $k$-server conjecture}, stating that the randomized competitive
ratio is $\Theta(\log k)$ on arbitrary metrics. Some cornerstone results towards
resolving this conjecture deserve closer attention. On the lower bound side,
Bartal et al.~\cite{BaBoMe06} used Ramsey-type phenomena for metric spaces to
show that the randomized competitive ratio is at least $\Omega(\log k / \log
\log k)$ for any metric space.\footnote{In the description of all lower bounds
on the competitive ratio for the $k$-server problem, we silently assume that the
metric space in question has more than $k$ points.} On the algorithmic side, 
a~major breakthrough (building on a long line of results for particular metrics)
was obtained by Bansal et al.~\cite{BaBuMN15}, who constructed an algorithm of
ratio poly-logarithmic in the number of metric space points, based on HST
embeddings (hierarchically separated trees) and the so-called fractional
allocation problem. It was recently improved by Bubeck et al.~\cite{BuCLLM18},
who used mirror descent dynamics with multi-scale entropic regularization to
obtain an~$O(\log^2 k)$-competitive algorithm on HSTs and an $O(\log^2k\log
n)$-competitive algorithm on general $n$-point metrics. Based on this,
Lee~\cite{Lee18} proposed a dynamic embedding technique to achieve a competitive
ratio poly-logarithmic in $k$ on arbitrary metrics.

\subsection{Weak Adversaries}

A way to compensate for the online algorithm's lack of knowledge of the future
is to assume that the  algorithm has more ``resources'' than the offline optimum
it is compared to. This natural concept, called \emph{resource augmentation},
has led to spectacular successes for online scheduling problems (see
e.g.~\cite{KalPru00,PhStTW02}). It can be a way to overcome pessimistic
worst-case bounds of the original setting. In the context of the $k$-server
problem, it is also known as the \emph{weak adversaries}
model~\cite{Koutso99,BaEJKP18} or the $(h,k)$-server problem: an~online
algorithm with $k$ servers is compared to an  optimal algorithm (an~adversary)
with $h \leq k$ servers.  For a metric space $\X$, let $\cD_\X(h,k)$ and
$\cR_\X(h,k)$ denote the best competitive ratios of deterministic and randomized
algorithms, respectively, for the $(h,k)$-server problem on $\X$.

Again, the first results for the $(h,k)$-server problem were developed for
uniform metrics: Sleator and Tarjan~\cite{SleTar85} gave an exact answer of
$\cD_\X(h,k)=k/(k-h+1)$, with the upper bound being achieved by the LRU (least
recently used) paging strategy. This implies that having $k = (1+\epsilon) \cdot
h$ servers suffices to attain a constant competitive ratio. It is natural to ask
whether such phenomenon extends to other metrics. This question was raised
already by Manasse et al~\cite{MaMcSl90} when they introduced the $k$-server
problem.

Formally, we study the following questions.
\begin{description}
\item[Strong $(h,k)$-server hypothesis:] \emph{For any metric space~$\X$ and any
	$\epsilon>0$, $\cD_\X(h,k)=O_\epsilon(1)$ whenever $k\ge (1+\epsilon) \cdot h$.}
\item[Weak $(h,k)$-server hypothesis:] \emph{For any metric space $\X$ and any
	$h\in\N$, $\cD_\X(h,k)=O(1)$ as~$k\to\infty$.}
\end{description}

Generalizing the result for uniform metrics, the same competitive ratio of
$k/(k-h+1)$ was later also obtained for weighted star metrics~\cite{Young94}.
More recently, Bansal et al.~\cite{BaElJK19} confirmed the strong $(h,k)$-server
hypothesis also for trees of bounded depth. Using
randomization, tight bounds of $\cR_\X(h,k)=\Theta(\log(1/\epsilon))$ were
obtained for uniform metrics~\cite{Young91} and weighted stars~\cite{BaBuNa12a}
when $k=(1+\epsilon) \cdot h$. The recent results by Bubeck et al.~\cite{BuCLLM18} and
Buchbinder et al.~\cite{BuGuMN19} for the $k$-server problem extend also to the
$(h,k)$-server setting, implying that $\cR_\X(h,k) = O(D \cdot
\log(1/\epsilon))$ for HSTs of depth $D$ when $k=(1+\epsilon) \cdot h$.\footnote{For
general trees of depth $D$, they obtain a \emph{fractional} algorithm achieving
the same competitive ratio.}

Surprisingly, the performance of some classical algorithms can slightly degrade
when additional online servers are available. Bansal et al.
\cite{BaEJKP18,BaElJK19} showed that this can occur for both the Work Function
algorithm and the Double Coverage algorithm. On the positive side,
Koutsoupias~\cite{Koutso99} showed that the Work Function algorithm obtains a
competitive ratio of at most $2h$ simultaneously for all $h\le k$. The algorithm
of \cite{BaElJK19} confirming the $(h,k)$-server hypothesis on bounded depth
trees is actually a variant of the Double Coverage algorithm.

In \cite{CoKoLa17}, the infinite server problem (denoted $\infty$-server problem
here) has been introduced as a possible way to resolve the question on general
metrics. This is the variant of the $k$-server problem where $k=\infty$, and all
infinitely many servers initially reside at the same point of the metric space.
The existence of an $O(1)$-competitive algorithm for the $\infty$-server problem
was shown to be equivalent to an affirmative resolution of the weak
$(h,k)$-server hypothesis.

In terms of lower bounds, it is known that --- unlike in the case of uniform and
weighted star metrics --- the ratio $\cD_\X(h,k)$ does \emph{not} converge to
$1$ on general metrics even as $k\to\infty$. Namely, Bar-Noy and
Schieber~\cite[page 175]{BorEl-98} showed that $\cD_\X(2,k) = 2$ for all $k$
when $\X$ is the line metric. For large $h$, the lower bound on $\cD_\X(h,k)$
was improved to $2.41$ \cite{BaElJK19} using depth-2 trees and later to
$3.14$~\cite{CoKoLa17} by a reduction from the $\infty$-server problem. In the
absence of any super-constant lower bounds, the $(h,k)$-server hypothesis
continued to seem plausible. In fact, Bansal et al.~\cite{BaElJK19} argued that
it would be very surprising if $\cD_\X(h,k) = \omega(1)$ (even for a
sufficiently large~$k$).

\subsection{Our Results}

Our main result is the refutation of both versions of the $(h,k)$-server hypothesis:

\begin{theorem}\label{thm:main}
There exists a tree metric $\T$ such that $\cR_\T(h,k) = \Omega(\log\log h)$, 
even for arbitrarily large~$k$.
\end{theorem}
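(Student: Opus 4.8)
The plan is to take $\T$ to be a single infinite, self-similar tree and to prove, by induction on a recursion depth $i$, that restricting to the top $\Theta(i)$ levels of $\T$ already forces competitive ratio $\Omega(i)$ against an offline optimum using only $h_i$ servers -- and, crucially, that this holds \emph{uniformly over all values of $k$}, no matter how large. The recursion will square the offline server count at each level, $h_i = O(h_{i-1}^2)$, so that $\log\log h_i = \Theta(i)$; then, picking $i$ with $h_i \le h < h_{i+1}$, we obtain $\cR_\T(h,k) \ge \Omega(i) = \Omega(\log\log h)$ for every large $k$. To cover randomized algorithms I work throughout via Yao's principle: at each level I exhibit a distribution over request sequences and bound $\mathbb{E}[\ALG]/\mathbb{E}[\OPT]$ for an arbitrary \emph{deterministic} online algorithm.

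The tree is built from gadgets: the level-$0$ gadget is a single point, and the level-$i$ gadget is a center $c_i$ to which $N_i$ isometric copies of the level-$(i-1)$ gadget are attached by edges of length $\ell_i$, where $\ell_i$ is chosen large compared with the diameter of a level-$(i-1)$ gadget and the $\ell_i$ decay geometrically so that the resulting infinite $\T$ has finite diameter. The base case is a small constant-depth tree on which an explicit adversary forces ratio strictly above $1$ with a constant number of offline servers, uniformly in $k$ -- a randomized strengthening of the known $\Omega(1)$ lower bounds on bounded-depth trees; in fact this base case is just a single application of the recursive step below to the level-$0$ gadget.

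The core is one lemma: \emph{if the level-$(i-1)$ gadget admits an adversary forcing ratio $\rho$ with $h'$ offline servers uniformly in $k$, then the level-$i$ gadget admits an adversary forcing ratio $\rho+\delta$ with $O((h')^2)$ offline servers uniformly in $k$, for an absolute constant $\delta>0$.} The adversary first runs a \emph{filtering phase} near $c_i$ that forces the online algorithm to commit servers into the subtrees across the long top edges; relative to the small amount the offline solution spends, no algorithm can be simultaneously ``well prepared'' in more than a controlled number of subtrees, so, taking $N_i$ suitably large, a counting argument pins down a subtree $S$ in which the online algorithm is provably underprepared -- underprepared enough that the induction hypothesis applies to its behaviour inside $S$. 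The adversary then plays the level-$(i-1)$ distribution inside $S$, while the offline solution pre-positions a dedicated team in $S$ (plus auxiliary servers that absorb the filtering phase at negligible cost); charging the algorithm's forced transport during filtering against that negligible offline overhead yields the additive gain $\delta$. Unrolling gives ratio $\ge 1+i\delta$ with server count $h_i$, and rewriting $i$ in terms of $h$ via $\log\log h_i = \Theta(i)$ completes the proof.

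The main obstacle is the recursive step, and within it the point that the online algorithm may have \emph{vastly more servers than the offline}: one must design the filtering phase so that (i) it is cheap for the $O((h')^2)$-server offline solution, (ii) it nonetheless certifies, for \emph{every} online algorithm regardless of $k$, a subtree in which the algorithm is underprepared in exactly the sense the induction hypothesis consumes, and (iii) the transport it forces is a constant fraction more expensive than the offline overhead -- a constant that must not decay with $i$. Because everything is randomized, one must in addition compose the level-$(i-1)$ request distribution with the (algorithm-dependent) selection of the subtree $S$ so that both $\mathbb{E}[\ALG]$ and $\mathbb{E}[\OPT]$ factor through the recursion with only an $O(1)$ additive loss in ratio. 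I expect that a slower, e.g.\ constant-factor, per-level blow-up of the offline server count would give the stronger bound $\Omega(\log h)$, so part of the difficulty is that the squaring appears intrinsic to making the per-level gain robust against an arbitrarily resourceful online algorithm.
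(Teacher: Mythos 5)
Your overall architecture (a recursive gadget tree, offline server count growing doubly exponentially in the recursion depth so that the accumulated ratio reads as $\Omega(\log\log h)$) points in the right direction, but the pivotal recursive lemma — ``if the level-$(i-1)$ gadget forces ratio $\rho$ with $h'$ offline servers \emph{uniformly in $k$}, then the level-$i$ gadget forces ratio $\rho+\delta$ with $O((h')^2)$ offline servers, again uniformly in $k$'' — is exactly the crux, and it is neither proved nor, as stated, true. With $N_i$ finite (``suitably large''), each gadget is a finite tree; an online algorithm with at least as many servers as the gadget has leaves parks one server on every leaf at a one-time cost and thereafter serves every request for free, so its total cost is an additive constant and no lower bound on the ratio can hold uniformly in $k$ on such a gadget. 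Even if you make $N_i$ infinite, the flooding obstruction you yourself identify as ``the main obstacle'' remains unresolved: when the online algorithm may pour arbitrarily much mass into the sub-gadgets, the only leverage an adversary retains is the transport cost into fresh sub-gadgets, and that yields a \emph{bounded} factor (the ratio of online to offline mass moved in), not an additive ratio gain $\delta$ per level on top of the inherited $\rho$. Your proposal defers precisely this mechanism (items (i)--(iii) of your ``filtering phase'') to future design, so the proof has a genuine gap at its core. Two further technical ingredients you would also need are missing: a guarantee that the forced cost grows without bound (so that additive constants in the competitiveness definition cannot absorb the lower bound), and an argument letting you invoke the induction hypothesis inside a sub-gadget even though requests elsewhere can drag mass out of it.

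For comparison, the paper's recursion deliberately does \emph{not} claim uniformity in $k$ at intermediate levels, and it does \emph{not} accumulate ratio: the inductive lemma fixes a single ratio $\rho=\Theta(\log \oldd)$ for all levels and instead grows the \emph{tolerated augmentation}, proving the bound only against algorithms with $k_i=h_i\bigl(1+i/(2\oldd)\bigr)$ servers, where $h_i=\oldd^i$. Each level uses a Young-style marking of $\oldd$ sub-gadgets (always descending into the marked subtree with least mass), and the surplus the online algorithm pays on the top edges is spent to absorb the additive losses from the subtrees rather than to increase the ratio. Uniformity in arbitrary $k$ is recovered only at the very top of $\T$, via infinitely many fresh copies of $T_{\ih}$ and an epoch dichotomy: either the algorithm keeps its mass in the active copy below $k_{\ih}$, in which case the within-gadget lemma applies (after an exchange argument that caps subtree mass), or it exceeds $k_{\ih}$ and pays $k_{\ih}$ against the adversary's $h_{\ih}$, and the adversary abandons the copy. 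Balancing $\rho=\Theta(\log\oldd)$ against the augmentation ratio $i/(2\oldd)$ with $\oldd\approx\sqrt{i}$ gives $\Omega(\log \ih)=\Omega(\log\log h)$. If you want to salvage your plan, you would have to either supply the per-level, $k$-uniform filtering mechanism (which I believe cannot exist in the form stated), or restructure it along these lines, tracking a modest, explicitly quantified online server budget through the induction and handling unbounded $k$ only at the outermost level.
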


Since $\cD_\T(h,k)\ge \cR_\T(h,k)$, the lower bound obviously extends to
deterministic algorithms. The underlying construction is simple. It is based on
recursively combining Young's lower bound for randomized
$(h,k)$-paging~\cite{Young91} along many scales. At higher scales, the
construction is applied to groups of servers rather than individual servers.

Due to the connection between the $(h,k)$-server problem and the $\infty$-server
problem~\cite{CoKoLa17}, a~direct consequence of Theorem~\ref{thm:main} is that
there is no competitive algorithm for the $\infty$-server problem on general
metrics. In fact, we first found the lower bound by analyzing the
$\infty$-server problem.

\begin{corollary}\label{cor:infty}
	The competitive ratio of the $\infty$-server problem on trees of depth $D$ is $\Omega(\log D)$. In particular, no algorithm for the $\infty$-server problem on general metrics has a constant competitive ratio.
\end{corollary}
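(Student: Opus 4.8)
The plan is to derive the corollary from Theorem~\ref{thm:main} together with the structure of its proof. The second assertion --- that no algorithm for the $\infty$-server problem on general metrics is $O(1)$-competitive --- is immediate from the equivalence established in \cite{CoKoLa17}: an $O(1)$-competitive $\infty$-server algorithm on every metric space would imply the weak $(h,k)$-server hypothesis, i.e.\ $\cD_\X(h,k)=O(1)$ as $k\to\infty$ for every $\X$ and every $h$. But Theorem~\ref{thm:main} exhibits a tree $\T$ with $\cR_\T(h,k)=\Omega(\log\log h)=\omega(1)$ even for arbitrarily large $k$, and $\cD_\T(h,k)\ge\cR_\T(h,k)$, a contradiction (in fact $\infty$-server is already non-$O(1)$-competitive on $\T$ itself). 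So the actual content of the corollary is the quantitative bound $\Omega(\log D)$ on trees of depth $D$.

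For that bound I would unwind the recursive construction behind Theorem~\ref{thm:main}. The construction proceeds in scales, built on top of Young's randomized $(h,k)$-paging lower bound \cite{Young91}: running it with $\ell$ scales produces a tree $\T_\ell$ whose depth $D$ is at most exponential in $\ell$, so $\ell=\Omega(\log D)$. I would then argue that the adversary strategy proving Theorem~\ref{thm:main}, which forces competitive ratio $\Omega(\ell)$ against \emph{any} online algorithm regardless of how many servers it uses, carries over verbatim to the $\infty$-server setting on $\T_\ell$. Two points need checking. First, the $\infty$-server initial configuration (all servers at one point) is handled by placing that point suitably inside $\T_\ell$ and charging the one-time cost of bringing servers into the active region of the construction against $\OPT$, or absorbing it as an additive constant. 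Second, the lower-bound argument never uses an upper bound on $k$ --- the theorem is stated ``for arbitrarily large $k$'' --- so $k=\infty$ is simply its limiting case. Combined with the depth estimate this yields an $\infty$-server competitive ratio of $\Omega(\ell)=\Omega(\log D)$ on trees of depth $D$.

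The main obstacle is precisely the reason the bound is $\Omega(\log D)$ and not, say, a bound attainable on a single star: on a weighted star the $\infty$-server problem is trivially $1$-competitive (keep one server at every point ever requested and never move again), so the construction must make essential use of the tree depth. In the recursion this is achieved because, at each scale, the ``points'' playing the role of pages in Young's instance are themselves \emph{moving} bundles of servers operated at the next lower scale; an online algorithm therefore cannot neutralize a scale by spreading its (now infinitely many) servers over all relevant points, since keeping up with a moving target still costs movement. The delicate step is to quantify this robustly through the $\ell$ nested scales --- to show that an algorithm which over-spreads within one scale's gadget is charged enough that the ratio still multiplies up across scales to $\Omega(\ell)$ --- which is essentially the proof of Theorem~\ref{thm:main} read with $k=\infty$, the scale-to-depth conversion replacing $\Omega(\log\log h)$ by $\Omega(\log D)$.
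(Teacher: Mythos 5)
Your reduction for the second assertion is fine: given the equivalence of \cite{CoKoLa17}, an $O(1)$-competitive $\infty$-server algorithm on all metrics would validate the weak $(h,k)$-server hypothesis, contradicting Theorem~\ref{thm:main}. (The paper gets this more directly, by putting copies of $T_i$ for \emph{all} $i$ under one root, but your route is legitimate.) The problem is the quantitative part, which is the actual content of the corollary, and there your account of the construction is wrong on the two points that matter. First, the depth--scale relation: in Lemma~\ref{lem:main} each scale adds exactly one level, so the tree $T_i$ built from $i$ scales has depth $i$ (and the corollary's tree has depth $\ih+1$), not depth exponential in the number of scales. Second, the ratio does \emph{not} ``multiply up across scales to $\Omega(\ell)$'': the recursion in Lemma~\ref{lem:main} forces the \emph{same} ratio $\rho$ at every scale ($\ALG \ge \rho\cdot\OPT_{h_i}-\delta$ for all $i$, with branching parameter $b=\lceil \exp(3\rho)\rceil$ fixed throughout). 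What accumulates with depth is not the ratio but the tolerable online surplus $k_i/h_i = 1+i/(2b)$, i.e.\ deeper trees let the lower bound survive against more online server mass per subtree. The $\Omega(\log D)$ then arises at the top level of the proof of Theorem~\ref{thm:main}: the epoch argument charges an algorithm that floods the active subtree the threshold $b^i(1+i/(2b))$ against the adversary's $b^i$, giving ratio $\min\{\rho,\, i/(2b)\}$ with $\rho=\tfrac13\ln b$, which is optimized at $b\approx\sqrt{i}$ to give $\Theta(\log i)=\Theta(\log D)$. A plan that tries to amplify the ratio multiplicatively across nested scales is not a reading of this proof ``with $k=\infty$''; it is a different (and unproved) construction, and your two miscounts merely happen to cancel to the same final exponent.

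Relatedly, the claim that ``the lower-bound argument never uses an upper bound on $k$'' is too glib: Lemma~\ref{lem:main} is stated only for algorithms with at most $k_i$ server mass, and the reason arbitrarily large $k$ (hence $k=\infty$) is handled is precisely the epoch mechanism in the proof of Theorem~\ref{thm:main}, which ends an epoch and charges the online algorithm as soon as its mass in the active subtree exceeds the threshold. The paper's own proof of the corollary is exactly this observation: keep the tree of Theorem~\ref{thm:main} with subtrees of a single type $T_{\ih}$ (depth $\ih+1$), run the identical epoch argument to get $\Omega(\log \ih)=\Omega(\log D)$, and include all $i$ to rule out any competitive $\infty$-server algorithm; your initial-configuration worry is moot since servers start at the root and only movement away from the root is charged. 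To repair your write-up, replace the ``scales multiply, depth is exponential'' narrative by the correct bookkeeping above.
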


\subsection{Preliminaries}

An online algorithm $\ALG$ is called $\rho$-competitive if
\begin{align*}
\ALG(\sigma) \le \rho \cdot \OPT(\sigma)+C 
\end{align*}
for all request sequences $\sigma$, where $\ALG(\sigma)$ and $\OPT(\sigma)$
denote the cost of $\ALG$ and the optimal (offline) cost for $\sigma$,
respectively, and ${C\ge 0}$ is a constant independent of $\sigma$. The
competitive ratio of a~problem is the infimum of all $\rho$ such that a
$\rho$-competitive algorithm exists. In the case of randomized algorithms,
$\ALG(\sigma)$ is replaced by its expectation. Note that for the 
$(h,k)$-server problem, \OPT denotes the optimal solution using $h$ servers, while \ALG uses $k$ servers.

An algorithm is \emph{fractional} if it is allowed to move an arbitrary fraction of a server,
paying the same fractions of the distance moved, but it is still required to
bring ``a total mass'' of at least one server to the requested point. A
fractional algorithm can be derived from a randomized one by setting the server
mass at each point to the expected number of servers; clearly, the cost of the fractional algorithm is at most the expected cost of the randomized algorithm.\footnote{On weighted stars and HST metrics, the converse is also true: Any fractional algorithm can be rounded online to a randomized integral one while increasing its cost by at most a multiplicative constant~\cite{BaBuNa12a,BaBuMN15}. It is unknown whether this also holds for general metrics.}

All metric spaces constructed in this paper are trees with a distinguished root, and we assume that servers reside initially at the root. We will charge cost (to both the online and offline algorithms) only for traversing edges in direction \emph{away} from the root. Since movement away from the root is within a factor $2$ of the total movement, the error due to this is absorbed in the asymptotic notation of our results.

For an infinite request sequence $\sigma$, we denote its prefix of the first $m$
requests by $\sigma_m$. 


\section{Proof of the lower bound}

Below we state the main lemma towards proving 
\pref{thm:main}.

\begin{lemma}\label{lem:main}
Fix arbitrary $\rho\ge 1$, $\delta>0$ and an integer $i \geq 0$. Let $\oldd=
\lceil \exp(3\rho) \rceil$, $h_i=\oldd^i$, $k_i= \oldd^{i}\cdot
\left(1+i/(2\oldd)\right)$. There exists a tree $T_i$ of depth $i$ such that,
for any fractional online $k_i$-server algorithm $\ALG$, there exists an
infinite request sequence $\sigma$ on $T_i$ satisfying two properties:
\begin{enumerate} 
\item[(a)] 
$\ALG(\sigma_m)\ge \rho \cdot\OPT_{h_i}(\sigma_m)-\delta$ for all $m\in\N$, where $\OPT_{h_i}(\sigma_m)$ denotes the optimal cost for serving $\sigma_m$ using $h_i$ servers.
\item[(b)] 
 If $i\ge 1$, then $\ALG(\sigma_m)\to\infty$ as $m\to\infty$.
\end{enumerate}
\end{lemma}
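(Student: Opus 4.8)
\medskip

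The base case $i=0$ is essentially trivial: $T_0$ is a single point (the root), $h_0=k_0=1$, and the algorithm is forced to keep its one server at the root, so serving any sequence of requests to the root costs nothing for both \ALG and \OPT; property (a) holds with equality at $0\ge 0-\delta$, and (b) is vacuous. (One may alternatively take $i=1$ as the genuine base case, using Young's lower bound for randomized $(h,k)$-paging directly on a weighted star.)

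\medskip

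For the inductive step, suppose $T_{i-1}$ exists with the claimed properties. I would build $T_i$ by taking a weighted star centered at the root with some number $N$ of leaves, and attaching to each leaf a scaled copy of $T_{i-1}$ (scaled so that its diameter is tiny relative to the star edge length — this is where the depth increases by one and where the separation between scales comes in). The intended semantics: at the top scale we run Young's $(h,k)$-paging lower bound where each ``page'' is one of the $N$ subtrees, but a ``server'' at the top scale is really a whole \emph{group} of $\oldd^{i-1}$ servers sitting inside one subtree. Concretely, $h_i=\oldd\cdot h_{i-1}$ suggests partitioning the $h$ offline servers into $\oldd$ groups of size $h_{i-1}$, one group per ``occupied'' subtree, and similarly the $k_i$ online servers correspond to roughly $\oldd(1+\tfrac{i}{2\oldd})=\oldd+\tfrac i2$ groups — i.e.\ the top-scale paging instance has cache ratio $k/h\approx 1+\tfrac{i}{2\oldd}$, which is exactly the regime where Young's bound gives an extra additive $\Theta(1)$ loss (with the $\oldd=\lceil\exp(3\rho)\rceil$ chosen so that this loss, summed over $i$ scales, beats $\rho$). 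Inside each subtree, recursively, the adversary plays the hard sequence from $T_{i-1}$ against whatever group of online servers currently occupies that subtree.

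\medskip

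The adversary strategy is thus two-layered and adaptive: it simulates Young's marking/lower-bound adversary at the top scale to decide \emph{which} subtree to ``request into'' next, and within the chosen subtree it feeds the recursively-guaranteed hard sequence against the online algorithm's restriction to that subtree. The key accounting is: (i) every time the top-scale adversary forces the online algorithm to bring mass into a subtree it had (nearly) evacuated, \ALG pays $\Omega(1)$ at the top scale, and Young's analysis guarantees this happens often enough that the top-scale cost alone is $\rho_{\mathrm{top}}\cdot\OPT_{\mathrm{top}}-\delta/2$ for the right top-scale ratio; (ii) \emph{simultaneously}, inside each subtree the online cost is at least $\rho_{i-1}\cdot\OPT_{h_{i-1}}^{(\text{that subtree})}-\delta/2$ by induction, and these subtree-local offline costs add up to the offline cost of a strategy using $h_i$ servers. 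Composing the two multiplicatively (this is the point of having the subtree diameters separated from the star edges) should yield $\rho_i\ge \rho_{i-1}+\Theta(1/\text{something})$ per level — arranged so that after enough levels $\rho_i\ge\rho$. Property (b) for $i\ge 1$ follows because the top-scale paging adversary can be run forever, extracting a fresh $\Omega(1)$ of online cost at the top scale on each ``phase,'' so $\ALG(\sigma_m)\to\infty$.

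\medskip

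\textbf{The main obstacle} I anticipate is making the ``composition'' rigorous: the top-scale argument wants to treat the online algorithm as a paging algorithm on $N$ pages with an integer cache, but the true online algorithm is fractional and its mass inside a subtree can be any real in $[0,k_i]$, not a clean ``this subtree is fully in/out of cache.'' One needs a careful potential/charging argument that (a) defines, from the fractional configuration, an appropriate fractional top-scale paging state (e.g.\ mass in subtree $j$ divided by $h_{i-1}$, or a thresholded version), (b) shows top-scale movement of this derived state is bounded by \ALG's top-scale movement, and (c) ensures the recursive sub-instance inside a subtree is invoked with a budget that matches the actual mass there — the awkward case being when a subtree holds a fractional amount of mass strictly between group sizes, so that neither the clean $T_{i-1}$ guarantee nor the clean top-scale paging bound applies verbatim. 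Reconciling the fractional ``leftover'' mass — and checking that the additive errors $\delta$ telescope correctly through the recursion (each level spending at most $\delta/2^{\text{level}}$, say) rather than blowing up — is where the real work lies; the choice $\oldd=\lceil\exp(3\rho)\rceil$ and $k_i=\oldd^i(1+i/(2\oldd))$ is presumably reverse-engineered precisely to absorb these losses.
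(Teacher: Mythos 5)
Your high-level architecture coincides with the paper's: induction on the depth, a root with many copies of the previous-level tree attached, a Young-style marking adversary at the top scale where a ``page'' is a subtree and a ``server'' is a group of $b^i$ servers, and recursive hard sequences inside subtrees. However, what you explicitly defer as ``the main obstacle'' and ``where the real work lies'' is precisely the content of the proof, so the proposal has a genuine gap rather than a complete argument. The paper's resolution of the fractional-coupling issue is concrete and simple: inside a marked subtree, the adversary issues requests according to the induction hypothesis \emph{until the online mass in that subtree reaches $k_i-\epsilon$} (by property (b)/(a), refusing to reach this threshold costs the algorithm unboundedly much inside, while the adversary's cost stays comparable); when choosing the next subtree to mark, the adversary takes the previously marked subtree with \emph{least} current mass, and an averaging argument over the $b-j+1$ candidates shows that refilling it to $k_i-\epsilon$ costs at least $\bigl((k_i-\epsilon)(b+1)-k_{i+1}\bigr)/(b-j+1)$ along the top edges, summing over a phase to about $b^i(\tfrac12+\tfrac{\ln b}{3})\ge \rho b^i+b^i/2$, versus the adversary's $b^i$ per phase (it keeps exactly $h_i$ servers in each marked subtree). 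No thresholded paging state or potential function is needed; mass below $k_i$ in a subtree is simply interpreted as sitting at the subtree's root, and the assumption that a subtree never holds more than $k_i$ is justified by a separate exchange/simulation argument producing a cheaper algorithm that respects the cap -- another step your plan does not supply.

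Two further points where your plan diverges in ways that would not work as stated. First, your accounting ``compose multiplicatively after scaling subtree diameters to be tiny, gaining $\rho_i\ge\rho_{i-1}+\Theta(\cdot)$ per level'' is not how the construction functions: if the subtrees are negligibly small, their costs cannot contribute to the ratio at all. In the paper the ratio $\rho$ is the \emph{same} at every level; what grows with depth is the tolerated online surplus $k_i/h_i=1+i/(2b)$, and the role of the recursion is to force the algorithm to commit nearly a full group of $k_i$ (not merely $h_i$) mass to each requested subtree, so that even a huge total mass $k_{i+1}$ yields only about half a spare group at the top scale. The inner costs are handled \emph{additively}: the per-phase top-edge surplus $b^i/2$ absorbs the additive losses $b/2$ coming from invoking the induction hypothesis (with $\delta=1/2$) in the $b$ marked subtrees. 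Second, for the additive constant you suggest telescoping $\delta/2^{\text{level}}$; the paper instead proves the statement for one fixed $\delta'$ (accounting separately for the possibly incomplete last phase and the initial marking) and then rescales all distances, which sidesteps the bookkeeping you were worried about. Without the threshold rule, the least-mass marking with its averaging bound, the per-phase absorption of the inner $\delta$'s, and the mass-cap exchange argument, the inductive step cannot be carried out from what you have written.
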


\begin{proof}
We prove the lemma by induction on $i$. For $i=0$, the tree $T_0$ is simply a single node, and all requests are given at this node. Clearly, the lemma holds here.

For the inductive step, we fix any $i \geq 0$. We will show that the lemma properties for $i+1$ hold 
for \emph{some} $\delta$. By 
scaling all distances by a small multiplicative constant, this
implies that $\delta$ can be made arbitrarily close to $0$,
yielding the lemma statement for $i+1$ and arbitrary $\delta$. 
	
Let $T_i$ be the tree induced by the induction hypothesis for $\delta = 1/2$. 
The root of $T_{i+1}$ has infinitely many children at distance $1$; all the
subtrees rooted at these children are copies of $T_i$. We will assume that the server mass in each subtree never exceeds $k_i$; this assumption will be justified later. It allows us to invoke the induction
hypothesis on the subtrees. If the mass inside a subtree is $k_i-c$ for some $c\ge
0$, we interpret this as mass~$c$ sitting at the root of the
subtrees. Note that the sub-algorithms for the different
subtrees are not independent of each other, as a request in one subtree can
trigger movement towards the root in another subtree. However, we construct the
request sequence in an~online manner where each request is independent of
decisions of the algorithm for future requests, and thus we can analyze the
sub-algorithms independently of each other.

Let $\epsilon>0$ be some small constant. The request sequence $\sigma$ consists
of phases numbered from $1$. In each phase, $\oldd$ subtrees $T_i$ will be marked
and among them $\oldd-1$ subtrees were marked in the previous phase. 
For this definition, we assume that right before phase $1$, in an artificial phase $0$ containing no requests, 
$\oldd$ arbitrary subtrees were marked. 
All phases proceed as follows:
\begin{itemize}
	\item Mark a fresh subtree $T_i$ that has never received any requests before.
	\item While the server mass in the fresh subtree is at most $k_i-\epsilon$, issue requests in it according to the induction hypothesis.
	\item For $j=1,\dots,\oldd-1$:
	\begin{itemize}
		\item Among the subtrees that were marked in the last phase but have not been marked (yet) in the current phase, mark the one with the least server mass.
		\item While there exists a subtree marked in the current phase where the server mass 
		is at most $k_i-\epsilon$, issue requests in this subtree according to the induction hypothesis.
	\end{itemize}
\end{itemize}

The request sequence satisfies Property (b): if $i=0$, then each request incurs at least cost $\epsilon$, and if $i\ge 1$ this follows by the induction hypothesis.
We now prove that Property (a) also holds.

We compare $\ALG$ against an offline algorithm \ADV with $h_{i+1} = \oldd^{i+1}$ servers that always has 
$h_i = \oldd^{i}$ servers at each marked subtree of the current phase, and uses servers optimally within the subtrees.

Consider some phase. Denote by $\ALG_{\ell}$ and $\ALG_{\le \ell}$ the cost of
$\ALG$ incurred during the phase along edges of level $\ell$ and along edges of level at most~$\ell$,
respectively. We define $\ADV_\ell$ and $\ADV_{\le \ell}$ analogously.
Here, we use the convention that edges incident to the leaves have
level $1$ and edges incident to the root have level $i+1$. 

Consider the case that the phase under consideration is complete. We analyze
first the cost along edges incident to the root. $\ALG$ pays at least $k_i-\epsilon$ to
move server mass $k_i-\epsilon$ to the fresh subtree. At the beginning of iteration $j$
of the for-loop, $\ALG$ has server mass at least $k_i-\epsilon$ in each of the 
$j$~subtrees that were marked during the current phase. Note that $j-1$ of them 
were marked in the previous phase.  Thus, the average amount of
server mass in the $\oldd-(j-1)$ subtrees that were marked in the last phase but
not yet in the current phase is at most $(k_{i+1}-j \cdot (k_i-\epsilon)) / (\oldd-j+1)$.
In effect, the cost to move mass to the subtree that is marked in the $j$th iteration is at least
\begin{align*}
k_i-\epsilon-\frac{k_{i+1}-j \cdot (k_i-\epsilon)}{\oldd-j+1}=\frac{(k_i-\epsilon)(\oldd+1)-k_{i+1}}{\oldd-j+1}.
\end{align*}
Hence, as $\epsilon\to 0$, the total cost of moving server mass to the marked subtrees of the phase is at least
\begin{align*}
\ALG_{i+1}&\ge k_i-o(1)+\sum_{j=1}^{\oldd-1}\frac{k_i(\oldd+1)-k_{i+1}}{\oldd-j+1} \\
&= \oldd^{i} \cdot \left(1+\frac{i}{2\oldd}+\sum_{j=1}^{\oldd-1}\frac{\left(1+\frac{i}{2\oldd}\right)(\oldd+1)-\oldd\left(1+\frac{i+1}{2\oldd}\right)}{\oldd-j+1}\right)-o(1)\\
&= \oldd^{i} \cdot \left(\frac{1}{2} + \left(\frac{1}{2}+\frac{i}{2\oldd} \right)
	+\sum_{j=1}^{\oldd-1}\frac{\frac{1}{2}+\frac{i}{2\oldd}}{\oldd-j+1}\right)-o(1)\\
&\ge \oldd^{i} \cdot \left(\frac{1}{2} + \frac{\ln \oldd}{3}\right)\\
&\ge \oldd^i\rho+ \oldd^i/2.
\end{align*}
In contrast, the offline cost during the phase along edges incident to the root is only
\begin{align*}
\ADV_{i+1}
&=\oldd^i
\end{align*}
because the offline algorithm moves only $\oldd^i$ servers from the subtree that was marked in the last but not in the current phase to the fresh subtree of the current phase.

For the cost within the subtrees, the induction hypothesis of (a) (applied to $\oldd$ marked subtrees with $\delta=1/2$) yields
\begin{align*}
\ALG_{\le i}\ge \rho\cdot\ADV_{\le i} - \oldd / 2.
\end{align*}
Therefore, for the total cost during a complete phase, we obtain
\begin{align*}
\ALG_{\le i+1} &= \ALG_{i+1} + \ALG_{\le i}\\
&\ge \rho\cdot\ADV_{i+1} + \oldd^i / 2 + \rho\cdot\ADV_{\le i} - \oldd / 2\\
&\ge \rho\cdot\ADV_{\le i+1}.
\intertext{In the last phase, which may be incomplete, we have}
\ALG_{\le i+1}&\ge \ALG_{\le i}\\
&\ge \rho\cdot\ADV_{\le i} - \oldd / 2\\
&\ge \rho\cdot\ADV_{\le i+1} - \rho \oldd^i - \oldd / 2.
\end{align*}
Now if we set $\delta'$ to be $\rho \oldd^i + \oldd / 2$ plus 
the cost of $\ADV$ to bring servers to the marked subtrees of phase $0$, then 
we obtain Property (a) for $i+1$ and a fixed $\delta'$. Recall that 
this yields the same property for arbitrary $\delta'$ by scaling 
all distances by a small factor.

Finally, it remains to justify the assumption that the server mass in each subtree $T_i$ never exceeds~$k_i$, which was necessary to allow invoking the induction hypothesis. Suppose after serving a request to a leaf~$u$, the algorithm ends up with server mass $k_i+c$ in the subtree that contains $u$, for some $c>0$. Call this subtree $S$. Without loss of generality, the distance from the root of $S$ to any leaf in $S$ is at most~$1$ (we can scale all subtrees $T_i$ down to achieve this). Upon serving the request at leaf $u$, at least mass $c+\epsilon$ traveled to leaf $u$ across the root $r$ of $T_{i+1}$. Consider an alternative algorithm $\ALG'$ that stores an amount $c$ of this mass at $r$ and brings mass $c$ from another vertex in $S$ to leaf $u$ instead. Upon serving this request, $\ALG'$ saves cost $c$ compared to $\ALG$, since it does not need to bring this mass from $r$ to the root of $S$. The next time that $\ALG$ would take mass from $S$ to another subtree $T_i$, we first take it from the mass $c$ that is stored at the root of $T_{i+1}$. Notice that there will be no request in $S$ until after the stored mass $c$ at $r$ has been used up. Once it has been used up, $\ALG'$ reorganizes its server mass so that its configuration is the same as that of $\ALG$ again. This requires at most cost $c$ for moving this much mass within $S$, which is the cost that $\ALG'$ had saved before. Thus, we have a new algorithm whose cost is less than that of $\ALG$, which never exceeds mass $k_i$ in any subtree (by repeating this idea), and for which our lower bound holds.
\end{proof}

We obtain the main result by combining the trees guaranteed by \pref{lem:main}.

\begin{proof}[Proof of \pref{thm:main}]
	For $i\in\N$, let $\oldd_i=\lfloor\sqrt{i}\rfloor$ and
	$\rho_i=\frac{1}{3}\ln \oldd_i$. The lower bound holds on the following tree
	$\T$: It contains as subtrees, for each $i\in\N$, infinitely many copies of
	the tree $T_i$ guaranteed by \pref{lem:main} for $\rho=\rho_i$ and $\delta=
	\oldd_i^i$. The roots of the subtrees $T_i$ are connected to the root of $\T$
	by edges of length $1$.
	
	Let $h\le k$ be the numbers of offline and online servers respectively. Let
	$\ih= \lfloor\sqrt{\ln h} \rfloor$. The adversarial sequence uses only
	subtrees of type $T_{\ih}$. It consists of epochs: 
	In each epoch, select a~subtree of type $T_{\ih}$ whose online
	server mass is zero.
	Requests are issued in this subtree as induced by \pref{lem:main}. As soon as the online server mass in the subtree exceeds $\oldd_{\ih}^{\ih}\cdot( 1+ \ih / (2 \oldd_{\ih}) )$, the epoch ends and a new epoch begins.
	
	At the start of each epoch, the offline algorithm brings $\oldd_{\ih}^{\ih}\le {\ih}^{\ih/2}\le \exp(\ih^2) \le h$ servers to the subtree of that epoch. For a given epoch, denote by $\ALG$ and $\ADV$, respectively, the online and offline cost suffered \emph{within} the active subtree of the epoch. By \pref{lem:main},
	\begin{align*}
	\ALG \ge \rho_{\ih} \cdot \ADV - \oldd_{\ih}^{\ih}.
	\end{align*}
	If the epoch runs indefinitely (because the algorithm never brings the required number of servers to the subtree), then the cost within the active subtree dominates the competitive ratio. Since $\rho_{\ih}=\Omega(\log \oldd_{\ih})=\Omega(\log \ih)=\Omega(\log\log h)$, the theorem follows.
	
	Otherwise, the online algorithm pays at least $\oldd_{\ih}^{\ih}\cdot( 1+ \ih / (2 \oldd_{\ih}) )$
	to bring as many servers to the subtree, whereas the offline algorithm pays only $\oldd_{\ih}^{\ih}$ to move servers to the subtree. Thus, the ratio of the total online to offline cost during each epoch is at least
	\begin{align*}
		\frac{\rho_{\ih} \cdot \ADV + \oldd_{\ih}^{\ih} \cdot \frac{\ih}{2 \oldd_{\ih}}}{\ADV+\oldd_{\ih}^{\ih}} \ge \min\left\{\rho_{\ih},\frac{\ih}{2\oldd_{\ih}}\right\} 
			& =\Omega(\log \ih) 
			=\Omega(\log\log h).
	\end{align*}
\end{proof}

The theorem holds also if instead of a single infinite tree $\T$, there is a \emph{finite} tree $\T_k$ for each $k$. The trees only need to be large enough so that, whenever we want to choose an empty subtree, we can instead find a subtree with negligibly small server mass.

\begin{proof}[Proof of \pref{cor:infty}]
Consider the same tree as in the proof of \pref{thm:main}, except that it contains the subtrees $T_i$ for only one value of $i=\ih$. By the identical arguments as in the proof of \pref{thm:main}, we obtain a lower bound of $\Omega(\log \ih)$ for trees of depth $\ih+1$. If the subtrees $T_i$ are included for all $i$, we obtain a metric space with no competitive algorithm for the $\infty$-server problem.
\end{proof}

\section{Conclusions}

We have refuted the $(h,k)$-server hypothesis by proving that,
on trees of sufficient depth, $\cR_\T(h,k) = \Omega(\log \log h)$ 
even when $k$ is arbitrarily large.  When expressed in terms of the depth $D$ of the tree, the lower bound amounts to $\Omega(\log D)$ and applies also to the $\infty$-server
problem.

The construction of our lower bound is inherently fractional: On higher scales, even if an algorithm is deterministic, it can move only a fraction of a group of servers. It would be interesting to show a lower bound for deterministic algorithms that is substantially larger than the randomized one.

Intriguing gaps remain between the lower and upper bounds. The upper bound that would follow from the randomized $k$-server conjecture when disabling the $k-h$ extra servers, $O(\log h)$, is exponentially larger than our lower bound. For deterministic algorithms, the gap is even doubly exponential.


\bibliographystyle{alpha}
\bibliography{references}

\newcommand{\etalchar}[1]{$^{#1}$}
\begin{thebibliography}{BGMN19}

\bibitem[ACN00]{AcChNo00}
Dimitris Achlioptas, Marek Chrobak, and John Noga.
\newblock Competitive analysis of randomized paging algorithms.
\newblock {\em Theoretical Computer Science}, 234(1--2):203--218, 2000.

\bibitem[BBM06]{BaBoMe06}
Yair Bartal, B{\'{e}}la Bollob{\'{a}}s, and Manor Mendel.
\newblock Ramsey-type theorems for metric spaces with applications to online
  problems.
\newblock {\em Journal of Computer and System Sciences}, 72(5):890--921, 2006.

\bibitem[BBMN15]{BaBuMN15}
Nikhil Bansal, Niv Buchbinder, Aleksander Madry, and Joseph Naor.
\newblock A polylogarithmic-competitive algorithm for the \emph{k}-server
  problem.
\newblock {\em Journal of the ACM}, 62(5):40:1--40:49, 2015.

\bibitem[BBN12]{BaBuNa12a}
Nikhil Bansal, Niv Buchbinder, and Joseph Naor.
\newblock A primal-dual randomized algorithm for weighted paging.
\newblock {\em Journal of the ACM}, 59(4):19:1--19:24, 2012.

\bibitem[BCL{\etalchar{+}}18]{BuCLLM18}
S{\'{e}}bastien Bubeck, Michael~B. Cohen, Yin~Tat Lee, James~R. Lee, and
  Aleksander Madry.
\newblock k-server via multiscale entropic regularization.
\newblock In {\em Proc. 50th ACM Symp. on Theory of Computing (STOC)}, pages
  3--16, 2018.

\bibitem[BE98]{BorEl-98}
Allan Borodin and Ran {El-Yaniv}.
\newblock {\em Online Computation and Competitive Analysis}.
\newblock Cambridge University Press, 1998.

\bibitem[BEJ{\etalchar{+}}18]{BaEJKP18}
Nikhil Bansal, Marek Eli{\'{a}}\v{s}, Lukasz Jez, Grigorios Koumoutsos, and
  Kirk Pruhs.
\newblock Tight bounds for double coverage against weak adversaries.
\newblock {\em Theory of Computing Systems}, 62(2):349--365, 2018.

\bibitem[BEJK19]{BaElJK19}
Nikhil Bansal, Marek Eli{\'{a}}\v{s}, Lukasz Jez, and Grigorios Koumoutsos.
\newblock The (\emph{h, k})-server problem on bounded depth trees.
\newblock {\em ACM Transactions on Algorithms}, 15(2):28:1--28:26, 2019.

\bibitem[BGMN19]{BuGuMN19}
Niv Buchbinder, Anupam Gupta, Marco Molinaro, and Joseph~(Seffi) Naor.
\newblock k-servers with a smile: Online algorithms via projections.
\newblock In {\em Proc. 30th ACM-SIAM Symp. on Discrete Algorithms (SODA)},
  pages 98--116, 2019.

\bibitem[CKL17]{CoKoLa17}
Christian Coester, Elias Koutsoupias, and Philip Lazos.
\newblock The infinite server problem.
\newblock In {\em Proc. 44th Int. Colloq. on Automata, Languages and
  Programming (ICALP)}, pages 14:1--14:14, 2017.

\bibitem[FKL{\etalchar{+}}91]{FKLMSY91}
Amos Fiat, Richard~M. Karp, Michael Luby, Lyle~A. McGeoch, Daniel~D. Sleator,
  and Neal~E. Young.
\newblock Competitive paging algorithms.
\newblock {\em Journal of Algorithms}, 12(4):685--699, 1991.

\bibitem[Kou99]{Koutso99}
Elias Koutsoupias.
\newblock Weak adversaries for the k-server problem.
\newblock In {\em Proc. 40th IEEE Symp. on Foundations of Computer Science
  (FOCS)}, pages 444--449, 1999.

\bibitem[Kou09]{Koutso09}
Elias Koutsoupias.
\newblock The k-server problem.
\newblock {\em Computer Science Review}, 3(2):105--118, 2009.

\bibitem[KP95]{KouPap95}
Elias Koutsoupias and Christos~H. Papadimitriou.
\newblock On the k-server conjecture.
\newblock {\em Journal of the ACM}, 42(5):971--983, 1995.

\bibitem[KP00]{KalPru00}
Bala Kalyanasundaram and Kirk Pruhs.
\newblock Speed is as powerful as clairvoyance.
\newblock {\em Journal of the ACM}, 47(4):617--643, 2000.

\bibitem[Lee18]{Lee18}
James~R. Lee.
\newblock Fusible {HST}s and the randomized k-server conjecture.
\newblock In {\em Proc. 59th IEEE Symp. on Foundations of Computer Science
  (FOCS)}, pages 438--449, 2018.

\bibitem[MMS90]{MaMcSl90}
Mark~S. Manasse, Lyle~A. McGeoch, and Daniel~D. Sleator.
\newblock Competitive algorithms for server problems.
\newblock {\em Journal of Algorithms}, 11(2):208--230, 1990.

\bibitem[MS91]{McGSle91}
Lyle~A. McGeoch and Daniel~D. Sleator.
\newblock A strongly competitive randomized paging algorithm.
\newblock {\em Algorithmica}, 6(6):816--825, 1991.

\bibitem[PSTW02]{PhStTW02}
Cynthia~A. Phillips, Clifford Stein, Eric Torng, and Joel Wein.
\newblock Optimal time-critical scheduling via resource augmentation.
\newblock {\em Algorithmica}, 32(2):163--200, 2002.

\bibitem[ST85]{SleTar85}
Daniel~D. Sleator and Robert~E. Tarjan.
\newblock Amortized efficiency of list update and paging rules.
\newblock {\em Communications of the ACM}, 28(2):202--208, 1985.

\bibitem[You91]{Young91}
Neal~E. Young.
\newblock On-line caching as cache size varies.
\newblock In {\em Proc. 2nd ACM-SIAM Symp. on Discrete Algorithms (SODA)},
  pages 241--250, 1991.

\bibitem[You94]{Young94}
Neal~E. Young.
\newblock The k-server dual and loose competitiveness for paging.
\newblock {\em Algorithmica}, 11(6):525--541, 1994.

\end{thebibliography}

\end{document}